\newtheorem{definition}{Definition}%
\newtheorem{theorem}{Theorem}%
\newtheorem{lemma}{Lemma}%
\newtheorem{remark}{Remark}
\newcommand{\removelatexerror}{\let\@latex@error\@gobble}
\newcommand{\F}{\mathbb{F}}
\newcommand{\floor}[1]{\left\lfloor #1 \right\rfloor}
\newcommand{\cF}{\mathcal{F}}
\newcommand{\cL}{\mathcal{L}}
\newcommand{\fD}{\mathfrak{D}}
\newcommand{\wt}{\mathrm{wt}}
\renewcommand{\dim}{\mathrm{dim}}
\newcommand{\Deg}{\mathrm{Deg}}
\newcommand{\ba}{\mathbf{a}}
\newcommand{\origin}{\mathbf{0}}
\newcommand{\bx}{\mathbf{x}}
\newcommand{\by}{\mathbf{y}}
\newcommand{\bc}{\mathbf{c}}
\newcommand{\bd}{\mathbf{d}}
\newcommand{\bb}{\mathbf{b}}
\newcommand{\bbZ}{\mathbb{Z}}
\DeclareMathOperator{\poly}{poly}
\newcommand{\Mod}[1]{\ (\mathrm{mod}\ #1)}
\newcommand{\Mods}[1]{\ (\mathrm{mod}^*\ #1)}
\newcommand{\dlow}{d_{\mathsf{low}}}
\newcommand{\elow}{e_{\mathsf{low}}}
\newcommand{\dLF}{d_{\cL(\cF)}}
\renewcommand{\epsilon}{\varepsilon}
\begin{document}
\title{Decoding of Lifted Affine-Invariant Codes}

\author{%
    \IEEEauthorblockN{Lukas~Holzbaur\IEEEauthorrefmark{1} and Nikita~Polyanskii\IEEEauthorrefmark{1}\IEEEauthorrefmark{2}}
	\IEEEauthorblockA{\IEEEauthorrefmark{1}Technical University of Munich, Germany}
	\IEEEauthorblockA{\IEEEauthorrefmark{2}Skolkovo Institute of Science and Technology, Russia}
	\IEEEauthorblockA{Emails: lukas.holzbaur@tum.de, nikita.polyansky@gmail.com}
	\thanks{L. Holzbaur's and N. Polyanskii's work was supported by the German Research Foundation (Deutsche Forschungsgemeinschaft, DFG) under Grant No. WA3907/1-1. 
	}
}

\maketitle

\begin{abstract}
Lifted Reed-Solomon codes, a subclass of lifted affine-invariant codes, have been shown to be of high rate while preserving locality properties similar to generalized Reed-Muller codes, which they contain as subcodes. This work introduces a simple bounded distance decoder for (subcodes of) lifted affine-invariant codes that is guaranteed to decode up to half of an asymptotically tight bound on their minimum distance. Further, long $q$-ary lifted affine-invariant codes are shown to correct almost all error patterns of relative weight $\frac{q-1}{q}-\epsilon$ for $\epsilon>0$.
\end{abstract}

\IEEEpeerreviewmaketitle

\section{Introduction}
Binary and $q$-ary Reed-Muller (RM) codes are among the oldest and best studied classes of codes. Their long-standing popularity is partially due to their good locality properties, guaranteeing linear dependencies between many small subsets of codeword symbols. Lifted Reed-Solomon (RS) and, more generally, lifted affine-invariant codes preserve this attribute, but simultaneously allow for an increased code rate compared to RM codes. Informally, the lift of an affine-invariant code (base code) is defined to be the set of all functions for which the restriction to any affine subspace of a fixed dimension is in the base code. Exploiting this structure has led to results that are of interest in a wide array of applications, including error-correction algorithms via majority logic decoding \cite[Ch. 13]{macwilliams1977theory},
locally decodable and testable codes \cite{guo2013new}, batch codes \cite{holzbaur2020lifted}, low-degree testing \cite{arora2003improved}, and list decoding \cite{goldreich2000learning,sudan2001pseudorandom}.

While these properties naturally lead to local decoding algorithms, i.e., randomized approaches to correctly recover a single symbol with high probability, they can also be exploited to design algorithms for the recovery of the entire codeword symbol-by-symbol through aggregation of the local decoding results. In Section~\ref{sec::bounded distance decoding}, we present such a deterministic bounded distance (BD) decoding algorithm for lifted affine-invariant codes. 
As long as the base code admits an efficient unique decoding algorithm, this decoder runs in polynomial time and is guaranteed to recover the codeword correctly for errors of weight up to half of an asymptotically tight bound on their minimum distance.
Then, in Section~\ref{sec::high error randomized decoding}, we analyse a fast randomized decoder for long $q$-ary codes constructed by lifting a fixed affine-invariant code. A random pattern of errors with relative weight less than $\frac{q-1}{q}-\epsilon$ is shown to be correctable with probability at least $1-\delta$, where $\delta$ can be exponentially small in length, in time $\log\frac{1}{\delta}\poly(\epsilon^{-1})$. This resembles the behaviour of randomized decoders for low-rate binary RM codes shown in~\cite{dumer2004recursive,krichevskiy1970number}.

The presented results can be applied to both generalized ($q$-ary) RM codes and lifted RS. A $q$-ary RM code $\mathcal{RM}_q(u,m)$
consists of all $m$-variate polynomials of degree at most $u$ with coefficients in $\F_q$. Non-binary RM codes were introduced by separate groups of authors in~\cite{delsarte1970generalized, kasami1968new,massey1973polynomial} and shown to be subfield subcodes of RS codes over $\F_{q^m}$ \cite{kasami1968new}. Thus, any decoding algorithm for RS codes can be used to decode RM codes. Randomized list-decoding algorithms for RM codes were proposed in~\cite{goldreich2000learning,arora2003improved,sudan2001pseudorandom} and three deterministic list-decoders for Reed-Muller codes running in polynomial time were introduced in \cite{pellikaan2004list}. Two of the latter view RM codes as subfield subcodes of RS codes and can decode beyond half the minimum distance requiring a polynomial number of field operations in the large field $\F_{q^m}$. An approach for a global decoding algorithm of RM codes based on local decoding has been discussed in~\cite{kim2016decoding}. 

Let $\cF=\mathcal{RM}_q(u,1)$ and note that the code $\cF$ is an RS code. Clearly, for $u<q$ the code $\mathcal{RM}_q(u,m)$ is a subcode of the lifted RS code $\cL(\cF)$, as introduced in~\cite{guo2013new}. 
It is known~\cite{guo2013new,holzbaur2020lifted} that for fixed $m$ and large $q$, the rate of lifted RS codes approaches one, whereas the rate of non-binary RM codes does not exceed $1/m!$. Surprisingly, similar to RM codes, they can also be seen~\cite{guo2016list} as subfield subcodes of (low-degree) RS codes and thereby (list-)decoded by RS (list-)decoders over $\F_{q^m}$.
Applying the decoder introduced in this work to lifted RS codes requires $n^2 \poly(\log q)$ operations in $\F_q$ (see Theorem~\ref{th::bounded distance decoder}), given a BMD decoder for the ($q$-ary) RS base code running in $q \poly(\log q)$ (see, e.g., \cite{gao2003new}) and guarantees to decode up to half of an asymptotically tight bound on their minimum distance, as bounded in \cite[Lemma~5.7]{guo2013new} (see Lemma~\ref{lem:distanceGuo}).

\section{Preliminaries}\label{sec::preliminaries}
We start by introducing some notation that is used throughout the paper. Let $[n]$ be the set of integers from $1$ to $n$. A vector is denoted by bold lowercase letters such as $\bd$. Let $q\coloneqq p^l$ for a prime integer $p$ and a positive integer $l$. Let $\F_q$ be a field of order $q$ and $\F_q^*$ denote the multiplicative group of the field. By $\bbZ_q$ denote the ring of integers modulo $q$. %

For an arbitrary set $D$, a set of functions  $\cF\subseteq\{D\to \F_q\}$ is said to be a \emph{code} defined over the domain $D$.  The code is called a \emph{linear} $[n,k]_q$-code with $n\coloneqq|D|$ and $k=\dim_{\F_q}(\cF)\coloneqq\log_q|\cF|$ if for any $f,g\in\cF$ and $\lambda\in\F_q$, the function $\lambda f + g$ belongs to $\cF$. 
In this paper, we will mainly consider the domain $D=\F_Q^t$, where $\F_Q$ is an extension field of $\F_q$. 
In this case, for any $f\in \cF$, there exists a unique polynomial (of degree at most $Q-1$ in each variable) in the ring of polynomials in $t$ variables with coefficients in $\F_Q$, denoted as $\F_Q[x_1,\ldots,x_t]$ or $\F_Q[\bx]$,  corresponding to the function $f$. Thus, $f$ can be represented as $\sum_{\bd\in\bbZ_Q^t}f_{\bd}\bx^{\bd}$, where $\bx^{\bd} = \prod_{i=1}^{m}x_i^{d_i}$ and $f_{\bd}\in\F_Q$. By $\Deg(f)$ denote the set of tuples $\bd$ such that $f_{\bd}\neq 0$. We also define the \emph{degree set} of the code $\cF\subseteq\{\F_Q^t\to\F_q\}$, written as $\Deg(\cF)$, to be the union of $\Deg(f)$ with $f\in\cF$. For any $f:\ D\to \F_q$, define its \emph{weight} to be $\wt (f)\coloneqq |\{a\in D:\ f(a)\neq 0\}|$.  The \emph{distance} between two functions $f,g:\ D\to \F_q$ is defined as $d(f,g) \coloneqq \wt(f-g)$. The \emph{minimal distance} in the code $\cF$ is then $d_\cF\coloneqq 
 \min\{d(f,g): \ f,g\in \cF,\ f\neq g\}$.  For a function $f:\ \F_Q^m\to\F_q$ and a set $S\subseteq\F_Q^m$, denote by $f|_S$ the \textit{restriction} of $f$ to the domain $S$, i.e., $f|_S:\ S \to\F_q$.

 Consider a code $\cF\subseteq \{\F_Q^t\to\F_q\}$. Suppose that there exists a decoder $\fD$ that takes an arbitrary function $g:\ \F_Q^t\to\F_q\cup\{*\} $ as an input and outputs either a function from $\cF$, or an error message. We say that $\fD$ \emph{recovers} $\cF$ from $e$ \textit{errors} and $r$ \emph{erasures}, if for any $g:\ \F_Q^t\to\F_q \cup \{*\}$  with $S=g^{-1}(*)$, $|S|\le r$, and $f\in \cF$ with $d(f|_{\F_Q^t\setminus S},g|_{\F_Q^t\setminus S})\le e$, the result of decoding  is correct, i.e., $\fD(g) = f$.

 Consider a code $\cF\subseteq \{\F_Q^{t}\to\F_q\}$. A function $A:\F^t_Q\to \F^t_Q$ is called \textit{affine} if $A(\bx)$ can be represented as $M\bx +\bb$ for some matrix $M\in\F_Q^{t\times t}$ and vector $\bb\in \F_Q^t$. If $M$ is nondegenerate, then $A$ is said to be an \textit{affine permutation}. 
 The code $\cF$ is said to be \textit{affine-invariant} if for every affine permutation function $A:\F^t_Q\to\F^t_Q$ and for every $f(\bx)\in \cF$, the function $f(A(\bx))$ belongs to $\cF$. 
 Many important properties of affine-invariant codes were derived in~\cite{kaufman2008algebraic, ben2011sums,guo2013new}.%
 
For a fixed basis $\{\boldsymbol{\gamma}_{1},\ldots, \boldsymbol{\gamma}_{t}\}$ of a $t$-dimensional vector space $V$ over $\F_Q^m$ define the linear map  $\varphi_{V}: \ V \mapsto \F_Q^t$ by
$$
    \varphi_{V}\left(\sum_{j=1}^t \lambda_j \boldsymbol{\gamma}_{j}\right) = (\lambda_1,\ldots, \lambda_t)\ \in\F_Q^t.
$$
For a function $g:\F_Q^m \rightarrow \F_q$ and an affine subspace $V+\ba$, where $\ba \in \F_Q^m$, define the function $g_\ba^{(V)}:\ \F_Q^t\to\F_q$ as 
\begin{align}\label{eq:gRestrictedToV}
g_\ba^{(V)}(\by) \coloneqq g(\varphi_{V}^{-1}(\by)+\ba) \ .
\end{align}
Note that, $g_{\ba'}^{(V)}(\by) = g_{\ba}^{(V)}(\by+\varphi_{V}(\ba'-\ba)) \ \forall \ \ba, \ba' \in V$. 
\begin{definition}[Lifted Affine-Invariant Code, {\cite[Definition~1.1]{guo2013new}}]\label{def:liftedCode}
 Let $\cF$ be an affine-invariant code $\cF\subseteq \{\F_Q^t\to\F_q\}$. The lifted code $\cL(\cF) \subseteq \{\F_Q^m\to \F_q \} $ is the set of functions $f$ such that $f_\ba^{(V)} \in\cF$ for any $t$-dimensional affine subspace $V+\ba \subset \F_Q^m$. 
\end{definition}

\begin{lemma}[{\cite[Lemma~5.7]{guo2013new}}]\label{lem:distanceGuo}
 Let $\cF\subseteq\{\F_Q^t\to\F_q\}$ be an affine-invariant code of relative distance $d_\cF$. Then the distance $d_{\cL(\cF)}$ of the lifted code $\cL(\cF)\subseteq \{\F_Q^m\to\F_q\}$ as in Definition~\ref{def:liftedCode} is bounded by
 \begin{align*}
   (d_\cF-1) \frac{Q^m}{Q^t-1} <  d_{\cL(\cF)} \leq d_\cF Q^{m-t} \ .
 \end{align*}
\end{lemma}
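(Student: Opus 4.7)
The plan is to establish the two bounds separately: the upper bound by an explicit construction of a low-weight codeword, and the lower bound by an incidence (double-counting) argument over all $t$-dimensional affine subspaces through a fixed nonzero coordinate of a codeword.

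For the upper bound $d_{\cL(\cF)}\leq d_\cF\,Q^{m-t}$, the natural candidate codeword is an ``extension by independence'' of a minimum-weight base codeword. Fix $g\in\cF$ with $\wt(g)=d_\cF$ and define $f:\F_Q^m\to\F_q$ by $f(x_1,\ldots,x_m)\coloneqq g(x_1,\ldots,x_t)$. By construction $\wt(f)=d_\cF\cdot Q^{m-t}$, so it suffices to show $f\in\cL(\cF)$. For any $t$-dimensional affine subspace $V+\ba$, the pullback equals $f_\ba^{(V)}(\by)=g(L\by+P\ba)$, where $P$ is the projection onto the first $t$ coordinates and $L\coloneqq P\circ\varphi_V^{-1}:\F_Q^t\to\F_Q^t$. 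When $L$ is invertible, affine invariance of $\cF$ immediately gives $f_\ba^{(V)}\in\cF$; the degenerate case is covered by the standard fact that an $\F_q$-linear affine-invariant code is also closed under precomposition with arbitrary (possibly non-injective) $\F_Q$-affine maps.

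For the lower bound $d_{\cL(\cF)}>(d_\cF-1)Q^m/(Q^t-1)$, fix any nonzero $f\in\cL(\cF)$ and a point $x_0\in\F_Q^m$ with $f(x_0)\neq 0$. Each of the $\binom{m}{t}_Q$ $t$-dimensional linear subspaces $V\subseteq\F_Q^m$ (counted by the Gaussian binomial coefficient) gives an affine subspace $V+x_0$ through $x_0$, and by Definition~\ref{def:liftedCode} the pullback $f_{x_0}^{(V)}$ belongs to $\cF$; it is nonzero since $f_{x_0}^{(V)}(\origin)=f(x_0)\neq 0$, so $V+x_0$ contains at least $d_\cF-1$ nonzero points of $f$ distinct from $x_0$. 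Double-counting pairs $(x,V)$ with $f(x)\neq 0$, $x\neq x_0$, and $x\in V+x_0$, and using that the number of $t$-dimensional subspaces of $\F_Q^m$ containing a fixed nonzero vector is $\binom{m-1}{t-1}_Q$, yields
$$(\wt(f)-1)\binom{m-1}{t-1}_{\!Q}\;\geq\;(d_\cF-1)\binom{m}{t}_{\!Q}.$$
The Gaussian binomial identity $\binom{m}{t}_Q/\binom{m-1}{t-1}_Q=(Q^m-1)/(Q^t-1)$ then produces $\wt(f)\geq 1+(d_\cF-1)(Q^m-1)/(Q^t-1)>(d_\cF-1)Q^m/(Q^t-1)$, the last step using $d_\cF\leq Q^t$.

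I expect no difficulty with the lower bound, which is a clean incidence argument once one recognises the right object to count. The main subtlety lies in the upper bound: showing that the ``constant in the last $m-t$ coordinates'' extension actually lies in $\cL(\cF)$ requires closure of $\cF$ under non-injective $\F_Q$-affine maps, a property that formally goes beyond the stated affine-permutation invariance but is a standard consequence of the structure theory of affine-invariant codes for the base codes relevant here.
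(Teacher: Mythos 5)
The paper does not actually prove Lemma~\ref{lem:distanceGuo}; it is imported verbatim from \cite[Lemma~5.7]{guo2013new}. The only in-text argument touching this statement is the short derivation of $\dLF\geq\dlow$ inside the proof of Theorem~\ref{th::bounded distance decoder}, which replaces your ``all subspaces through a point'' count by a (partial) spread $V_1,\ldots,V_s$ with pairwise trivial intersection. Your double-counting over all $\binom{m}{t}_Q$ subspaces is the stronger and more classical route: it yields the factor $\frac{Q^m-1}{Q^t-1}$ unconditionally, which is exactly what the Lemma claims, whereas the spread argument only matches this when $t\mid m$ and otherwise degrades to $Q^{m-t}$ --- this is precisely why the paper introduces the separate quantity $\dlow\leq\dLF$. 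What the spread buys in exchange is disjointness of the local views, which is what the decoder needs; your incidence count would overcount shared error positions if used for decoding. Both halves of your argument are sound and are, in substance, the original proof of \cite{guo2013new}: the cylinder extension $f(\bx)=g(x_1,\ldots,x_t)$ for the upper bound, and the incidence bound $(\wt(f)-1)\binom{m-1}{t-1}_Q\geq(d_\cF-1)\binom{m}{t}_Q$ for the lower bound, with the Gaussian-binomial ratio computed correctly.

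Two small caveats, neither fatal. First, you correctly flag that the cylinder construction needs $\cF$ to be closed under \emph{non-invertible} affine maps, not just affine permutations; for linear affine-invariant codes this is indeed a standard lemma of \cite{guo2013new}, but since the present paper's definition of affine invariance only quantifies over permutations, this step genuinely rests on linearity and should be cited rather than waved at. Second, your lower-bound argument works with $\wt(f)$, i.e., it tacitly identifies minimum distance with minimum weight (again linearity); the permutation-safe version compares two codewords $f,\tilde f$ disagreeing at $x_0$, exactly as the paper does in Theorem~\ref{th::bounded distance decoder}, and the same count goes through. Finally, note that the strict inequality in the last step requires $d_\cF<Q^t$ (your ``$d_\cF\leq Q^t$'' is not quite enough: for $\cF$ the constant functions one has $d_\cF=Q^t$ and $d_{\cL(\cF)}=Q^m$, so the strict bound fails); this is an implicit non-degeneracy assumption of the cited lemma rather than a defect of your argument.
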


\section{Bounded Distance Decoding} \label{sec::bounded distance decoding}

In this section we introduce a simple bounded distance decoder for lifted affine-invariant codes. The main principle is based on the fact that, by definition, the restriction of a lifted affine-invariant code $\cL(\cF)$ to any affine subspace belongs to the code $\cF$. Assuming a fixed value at one position, we derive the minimal number of positions in which two functions would have to disagree for the decoding of these restrictions to give the respective result. We then show that, if the number of such positions is within the decoding radius, the value that results in the lowest number, must be the correct value of the function at this position.

Fix a (partial) spread of $\F_Q^m$, treated as an $m$-dimensional vector space, into $t$-dimensional subspaces $V_1,\ldots,V_s\subset \F_Q^m$ (e.g., see~\cite{bu1980partitions}), i.e., $V_i\cap V_j=\origin$ for $i\neq j$ and
\begin{align}
s=\begin{cases}
\frac{Q^{m}-1}{Q^t-1},\quad &\text{if }t\mid m,\\
Q^{m-t},\quad &\text{if }t \nmid m.
\end{cases} \label{eq:numberOfSubspaces}
\end{align}

\begin{definition}\label{def:distancePreliminaries}
For $1\le t\leq m/2$, let $\cF\subseteq\{\F_Q^t\to\F_q\}$ be an affine-invariant code of distance $d_\cF$. Define $e:=\lfloor\frac{d_{\cF}-1}{2} \rfloor$. Consider a function $g: \F_Q^m\to\F_q $ and let $g_\ba^{(V_i)}$ be as in \eqref{eq:gRestrictedToV}. For any point $\ba \in \F_Q^m$, field element $\alpha\in\F_q$, and integer $j\in \{0,1,\ldots, e\}$, define $M_{\ba}(\alpha,j)$ to be the number of affine subspaces of the form $\ba + V_i$ such that there exists \footnote{Note that by definition of $e$ there exists at most one such function in $\cF$.} 
a $\hat g_\ba^{(V_i)} \in \cF$ with $\hat g_\ba^{(V_i)}(\origin) = \alpha$ and
\begin{align}
\begin{array}{ll}
    \bullet \ \ d(g_\ba^{(V_i)}|_{\F_Q^t\setminus \origin} , \hat g_\ba^{(V_i)} |_{\F_Q^t\setminus \origin})=j, & \text{if } d_\cF \text{ is even,}\\
    \bullet \ \ d(g_\ba^{(V_i)},\hat g_\ba^{(V_i)}) = j, & \text{if } d_\cF \text{ is odd.}
\end{array} \label{eq:distanceToF}
\end{align}
 Further, define $M_{\ba}(\star)$ to be the number of affine subspaces $\ba+V_i$ for which no $\hat g_\ba^{(V_i)} \in \cF$ that satisfies \eqref{eq:distanceToF} exists for any $\alpha \in \F_q$ and $j\in \{0,1,\ldots, e\}$. For $\alpha\in\F_q$, denote
 \begin{align*}
 \delta_\ba(\alpha)\coloneqq \left\{ \begin{array}{ll}
     \mathbbm{1}\{g(\ba) \neq \alpha\}, & \text{if} \ d_\cF \ \text{is odd}  \\
     0, & \text{if} \ d_\cF \ \text{is even.}
 \end{array}  \right. 
 \end{align*}
  For $\ba\in \F_Q^m$ and $\alpha \in \F_q$, define
  \begin{align*}
     N_\ba(\alpha) &\coloneqq \mathbbm{1}\{g(\ba) \neq \alpha\} + \sum_{j=0}^{e}(j- \delta_\ba(\alpha)) M_\ba(\alpha, j) \\
&+ \sum_{\beta\neq \alpha}\sum_{j=0}^{e}(d_\cF-1-j+ \delta_\ba(\beta))M_\ba(\beta, j)\\
&+(e+1- \delta_\ba(\alpha))M_\ba(\star) .
  \end{align*}
\end{definition}

With Definition~\ref{def:distancePreliminaries} we can express the distance between two functions in terms of the distances between their respective restrictions to affine subspaces, given by $N_\ba(\alpha)$.

\begin{lemma}\label{lem:distanceFtoG}
  Let $\cF$, $g$, and $N_\ba(\alpha)$ be as in Definition~\ref{def:distancePreliminaries} and $\cL(\cF) \subseteq \{\F_Q^m\to\F_q\}$ be a lifted code as in Definition~\ref{def:liftedCode}. Then for any $\ba\in \F_Q^m$, $\alpha \in \F_q$, and $f \in \cL(\cF)$ with $f(\ba)=\alpha$ it holds that
      $d(f,g) \geq N_\ba(\alpha).$
\end{lemma}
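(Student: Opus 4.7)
The plan is to lower bound $d(f,g)$ by decomposing it along the partial spread $V_1,\dots,V_s$ and, on each piece $\ba+V_i$, exploit the fact that $f_\ba^{(V_i)}\in\cF$ with $f_\ba^{(V_i)}(\origin)=\alpha$ (immediate from Definition~\ref{def:liftedCode} and $f(\ba)=\alpha$). Because $V_i\cap V_j=\origin$, the sets $(\ba+V_i)\setminus\{\ba\}$ are pairwise disjoint subsets of $\F_Q^m$, so
\begin{align*}
d(f,g)\ \geq\ \mathbbm{1}\{g(\ba)\neq\alpha\}+\sum_{i=1}^{s}D_i,
\end{align*}
where $D_i\coloneqq d\bigl(f_\ba^{(V_i)}|_{\F_Q^t\setminus\origin},\,g_\ba^{(V_i)}|_{\F_Q^t\setminus\origin}\bigr)$ is the distance between the two restrictions after removing the origin. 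It therefore suffices to lower bound each $D_i$ by the contribution that $\ba+V_i$ makes to $N_\ba(\alpha)$.

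I would first upgrade the uniqueness remark of Definition~\ref{def:distancePreliminaries}: via the triangle inequality and $2e<d_\cF$, at most one witness $\hat g_\ba^{(V_i)}\in\cF$ can satisfy \eqref{eq:distanceToF} across \emph{all} pairs $(\gamma,j)$, so each subspace is counted in exactly one $M_\ba(\gamma,j)$ or lies in $M_\ba(\star)$. A direct check in both parities shows that whenever $\hat g$ witnesses $M_\ba(\gamma,j)$, the distance between $g_\ba^{(V_i)}$ and $\hat g$ on $\F_Q^t\setminus\origin$ equals $j-\delta_\ba(\gamma)$. Case~(a): if $\ba+V_i$ is in $M_\ba(\alpha,j)$ with witness $\hat g$, then $\hat g(\origin)=\alpha=f_\ba^{(V_i)}(\origin)$, and either $f_\ba^{(V_i)}=\hat g$, giving $D_i=j-\delta_\ba(\alpha)$ on the nose, or $d(f_\ba^{(V_i)},\hat g)\ge d_\cF$, in which case the triangle inequality together with $j-\delta_\ba(\alpha)\le e$ still forces $D_i\ge j-\delta_\ba(\alpha)$. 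Case~(b): if $\ba+V_i$ is in $M_\ba(\beta,j)$ with $\beta\neq\alpha$, then $f_\ba^{(V_i)}$ and $\hat g$ differ at the origin, forcing $d(f_\ba^{(V_i)}|_{\F_Q^t\setminus\origin},\hat g|_{\F_Q^t\setminus\origin})\ge d_\cF-1$ and, by the triangle inequality, $D_i\ge d_\cF-1-j+\delta_\ba(\beta)$. Case~(c): if $\ba+V_i\in M_\ba(\star)$, then $f_\ba^{(V_i)}$ itself cannot witness $M_\ba(\alpha,j')$ for any $j'\le e$, which forces $D_i\ge e+1-\delta_\ba(\alpha)$.

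Summing these three bounds over $i=1,\dots,s$ and adding the $\mathbbm{1}\{g(\ba)\neq\alpha\}$ term from the decomposition reproduces $N_\ba(\alpha)$ exactly, giving $d(f,g)\ge N_\ba(\alpha)$. The main obstacle is the parity bookkeeping encoded by $\delta_\ba$: for odd $d_\cF$ the distances in \eqref{eq:distanceToF} include the origin, so the distance on $\F_Q^t\setminus\origin$ from $g_\ba^{(V_i)}$ to a witness with value $\gamma$ at the origin becomes $j-\mathbbm{1}\{g(\ba)\neq\gamma\}=j-\delta_\ba(\gamma)$, while for even $d_\cF$ the origin is already excluded and $\delta_\ba\equiv 0$. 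The $\delta_\ba$ correction is tuned precisely so that the three triangle-inequality arguments above yield the same formulas in both parities, and verifying this parity invariance in each case is the only step that is not entirely routine.
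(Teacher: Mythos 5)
Your proposal is correct and follows essentially the same route as the paper's proof: decompose $d(f,g)$ along the disjoint punctured subspaces $(\ba+V_i)\setminus\{\ba\}$ of the spread, then lower bound each restricted distance by a case analysis on whether the subspace contributes to $M_\ba(\alpha,j)$, $M_\ba(\beta,j)$ with $\beta\neq\alpha$, or $M_\ba(\star)$, using the triangle inequality against the witness $\hat g_\ba^{(V_i)}$. The only cosmetic difference is that you carry both parities uniformly through $\delta_\ba$, whereas the paper treats odd $d_\cF$ first and then remarks that the even case is analogous.
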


\begin{proof}
  Given the point $\ba\in\F_Q^m$, we count the number of positions that must differ between $f$ and $g$ given the values of $M_\ba(\beta,j), \ \forall \beta \in \F_q, \ \forall j\in\{0,1\ldots,e\},$ and $M_\ba(\star)$. Recall that $M_\ba(\beta,j)$ is the number of affine subspaces $\ba+V_i$ for which $\hat g_\ba^{(V_i)}(\origin) = \beta$ and \eqref{eq:distanceToF} holds. By definition, these affine subspaces intersect \emph{only} in $\ba$ and the sum over all $M_{\ba}(\beta,j)$ is the number of affine subspaces of the form $\ba+V_i$, i.e.,
\begin{align}\label{eq::number of lines}
  \sum_{\beta,j} M_\ba(\beta,j) + M_{\ba}(\star) = s,
\end{align}
for $s$ as in \eqref{eq:numberOfSubspaces}.
Hence, the distance between $f$ and $g$ is lower bounded by %
  \begin{align*}
      d(f,g) &\geq \mathbbm{1}\{f(\ba) \neq g(\ba) \} + \sum_{i=1}^s d(f_\ba^{(V_i)}|_{\F_Q^t\setminus \origin} , g_\ba^{(V_i)} |_{\F_Q^t\setminus \origin}) .
  \end{align*}
  As $f(\ba)=\alpha$ by assumption, we have $\mathbbm{1}\{f(\ba) \neq g(\ba) \} =\mathbbm{1}\{g(\ba) \neq \alpha \}$. 
For the remaining positions, first consider the case of odd $d_\cF$. For all affine subspaces contributing to $M(\alpha,j)$, where $j\leq e$, we have 
  \begin{align}
      d(f_\ba^{(V_i)} , g_\ba^{(V_i)}) \geq \begin{cases}
      j, & \text{if } \hat g_\ba^{(V_i)} = f_\ba^{(V_i)},\\
      d_\cF-t \geq j, & \text{else.}
      \end{cases} \label{eq:secondSummandNwithBA}
  \end{align}
  Excluding point $\ba$ in $f$ and $g$, i.e., the origin in the restrictions to $\ba + V_i$, we obtain
  \begin{align*}
      d(f_\ba^{(V_i)}|_{\F_Q^t \setminus \{\origin\}}, g_\ba^{(V_i)}|_{\F_Q^t \setminus \{\origin\}}) \geq j- \mathbbm{1}\{g_\ba^{(V_i)}(\origin) \neq \alpha \}\ .
  \end{align*}
  Now consider the affine subspaces contributing to $M(\beta,j)$ with $\beta \neq \alpha$. As $f_\ba^{(V_i)} , \hat g_\ba^{(V_i)} \in\cF$ and $f_\ba^{(V_i)}(\origin) =\alpha \neq \beta = \hat g_\ba^{(V_i)}(\origin)$, we have $d(f_\ba^{(V_i)}, \hat g_\ba^{(V_i)}) \geq d_\cF$. Therefore
  \begin{align*}
      d(f_\ba^{(V_i)}|_{\F_Q^t \setminus \{\origin\}}, \hat g_\ba^{(V_i)}|_{\F_Q^t \setminus \{\origin\}}) \!\geq\! d_\cF\! -\! \underbrace{\mathbbm{1}\{f_\ba^{(V_i)}(\origin) \!\neq\! g_\ba^{(V_i)}(\origin) \}}_{= 1} .
  \end{align*}
  Further, we have $ d(g_\ba^{(V_i)} , \hat g_\ba^{(V_i)}) = j$ and
   \begin{align*}
      d(g_\ba^{(V_i)}|_{\F_Q^t \setminus \{\origin\}},\hat g_\ba^{(V_i)}|_{\F_Q^t \setminus \{\origin\}}) = j-\mathbbm{1}\{g_\ba^{(V_i)}(\origin) \neq \beta\} .
  \end{align*}
  By the triangle inequality we get
  \begin{align*}
      d(f_\ba^{(V_i)}|_{\F_Q^t \setminus \{\origin\}}, g_\ba^{(V_i)}|_{\F_Q^t \setminus \{\origin\}}) \! \geq\! d_\cF\!-\!1\!-\!j\!+\!\mathbbm{1}\{g_\ba^{(V_i)}(\origin) \!\neq\! \beta\}  .
  \end{align*}
  Finally, as $f_\ba^{(V_i)} \in \cF$, a necessary condition for an affine subspace to contribute to $M_\ba(\star)$ is $d(g_\ba^{(V_i)},f_\ba^{(V_i)}) \geq e+1$. Again, excluding position $\ba$ we get 
  \begin{align*}
      d(g_\ba^{(V_i)}|_{\F_Q^t \setminus \{\origin\}},f_\ba^{(V_i)}|_{\F_Q^t \setminus \{\origin\}}) \! \geq \! e\!+\!1\!-\!\underbrace{\mathbbm{1}\{f_\ba^{(V_i)}(\origin) \!\neq\!g_\ba^{(V_i)}(\origin) \}}_{= \mathbbm{1}\{g_\ba^{(V_i)}(\origin)  \neq \alpha \}}  .
  \end{align*}
  By the same arguments, we obtain the lower bounds for even $d_\cF$. The only difference to the case of odd $d_\cF$ is that the erasure placed in position $ \hat g_\ba^{(V_i)}(\origin)$ means that none of the $j$ errors can be in position $\ba$.
  
The lemma statement follows from observing that $N_\ba(\alpha)$ is defined as the weighted sum over these cases.
\end{proof}

\begin{definition}\label{def:dlow}
For $1\le t\le m/2$, let $\cF\subseteq\{\F_Q^t\to\F_q\}$ be an affine-invariant code of distance $d_\cF$. Let $\cL(\cF)\subseteq \{\F_Q^m\to\F_q\}$ be a lifted code as in Definition~\ref{def:liftedCode}. We define
$$
\dlow \coloneqq
\begin{cases}
(d_\cF-1) \frac{Q^m-1}{Q^t-1}+1,\quad &\text{if }t\mid m.\\
(d_\cF-1) Q^{m-t}+1,\quad &\text{otherwise.}
\end{cases}
$$
\end{definition}

\begin{remark}
Note that by \cite[Lemma~5.7]{guo2013new} (see~Lemma~\ref{lem:distanceGuo}) we have $\dLF \geq \dlow$. Further, as $t\mid m$ implies $Q^t-1 | Q^m-1$, it is easy to check that $\dlow$ coincides with the lower bound of \cite[Lemma~5.7]{guo2013new} in this case. On the other hand, $\dlow$ is slightly lower if $t\nmid m$, due to the fact that \cite[Lemma~5.7]{guo2013new} employs arguments based on \emph{all} affine subspaces passing through a point, while $\dlow$ can be obtained by only considering a partitioning of $\F_Q^m$ into a (partial) spread, as will be shown in the proof of Theorem~\ref{th::bounded distance decoder}.
\end{remark}

It remains to show the existence of a BD decoder based on the distance measure introduced in Lemma~\ref{lem:distanceFtoG}.

\begin{theorem}\label{th::bounded distance decoder}
For $1\le t\le m/2$, let $\cF\subseteq\{\F_Q^t\to\F_q\}$ be an affine-invariant code of distance $d_\cF$ and $\fD'$ be a decoder recovering $\cF$ from $e\coloneqq \floor{\frac{d_\cF-1}{2}}$ errors and, if $d_\cF$ is even, from one erasure. Let $\cL(\cF)\subseteq \{\F_Q^m\to\F_q\}$ be a lifted code as in Definition~\ref{def:liftedCode} and let $\dlow$ be as in Definition~\ref{def:dlow}.

Then, there exists a decoder $\fD$ that recovers $\cL(\cF)$ from $\elow \coloneqq \lfloor \frac{\dlow-1}{2}\rfloor$ errors in time $O(Q^{2m-t}T(\fD'))$ and $O(Q^{2m-2t}T(\fD'))$ for even and odd distance $d_\cF$, respectively.

\end{theorem}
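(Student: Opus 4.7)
My plan is to exhibit a symbol-by-symbol decoder that recovers $f(\ba)$ independently for each $\ba\in\F_Q^m$, using the quantity $N_\ba(\alpha)$ from Definition~\ref{def:distancePreliminaries} as the decision statistic. Concretely, fix a partial spread $V_1,\ldots,V_s$ satisfying~\eqref{eq:numberOfSubspaces}. For each $\ba\in\F_Q^m$ and each $i\in[s]$, the decoder $\fD$ invokes $\fD'$ on the restriction $g_\ba^{(V_i)}$, placing an erasure at the origin when $d_\cF$ is even, thereby producing either a candidate $\hat g_\ba^{(V_i)}\in\cF$ or a failure symbol. From these it tabulates $M_\ba(\beta,j)$ and $M_\ba(\star)$, assembles $N_\ba(\alpha)$ for every $\alpha\in\F_q$, and declares $f(\ba)$ to be any $\alpha$ with $N_\ba(\alpha)\le \elow$.

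For correctness, let $\alpha_0=f(\ba)$ denote the true value. Lemma~\ref{lem:distanceFtoG} immediately supplies $N_\ba(\alpha_0)\le d(f,g)\le \elow$, so at least one good candidate survives the test. The decisive step will be to establish the companion inequality
\begin{equation*}
N_\ba(\alpha_0)+N_\ba(\alpha)\ge \dlow \qquad\text{for every }\alpha\neq\alpha_0,
\end{equation*}
which, combined with $\elow\le (\dlow-1)/2$, forces $N_\ba(\alpha)\ge \dlow-\elow\ge \elow+1$ and pins $\alpha_0$ down as the unique $\alpha$ meeting the decoder's criterion. To prove the inequality I would partition the $s$ subspaces $\ba+V_i$ by the value $\beta_i=\hat g_\ba^{(V_i)}(\origin)$, treating separately those counted by $M_\ba(\star)$, and verify a uniform per-subspace lower bound of $d_\cF-1$ on the combined contribution to $N_\ba(\alpha_0)+N_\ba(\alpha)$. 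For $\beta_i\in\{\alpha_0,\alpha\}$ the two halves add to $j+(d_\cF-1-j)$ with the $\delta$-corrections cancelling exactly; for the remaining $\beta_i$ the contribution is $2(d_\cF-1-j)\ge d_\cF-1$ since $j\le e$; and the $\star$ class yields $2(e+1)-\delta_\ba(\alpha_0)-\delta_\ba(\alpha)\ge d_\cF-1$. Adding the unavoidable leading contribution $\mathbbm{1}\{g(\ba)\neq\alpha_0\}+\mathbbm{1}\{g(\ba)\neq\alpha\}\ge 1$ and substituting $s$ from~\eqref{eq:numberOfSubspaces} recovers $s(d_\cF-1)+1=\dlow$ exactly, matching Definition~\ref{def:dlow}.

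For the complexity, I will count the distinct $t$-dimensional affine subspaces of the form $\ba+V_i$: since each such subspace contains $Q^t$ points, there are $s\cdot Q^{m-t}=O(Q^{2(m-t)})$ of them in total. When $d_\cF$ is odd a single call to $\fD'$ per subspace suffices for every point it contains (the restrictions at different basepoints differ only by a known shift), giving $O(Q^{2m-2t}\,T(\fD'))$ operations overall. When $d_\cF$ is even the erasure position depends on $\ba$, so $Q^t$ invocations are required per subspace, and the total becomes $O(Q^{2m-t}\,T(\fD'))$. The $O(qs)$ bookkeeping to assemble the $N_\ba(\alpha)$'s per point is absorbed into these bounds.

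The main obstacle will be the precise accounting inside the key inequality: the $\delta_\ba$-corrections and the leading indicator have to be tracked tightly enough that contributions from $\beta_i\in\{\alpha_0,\alpha\}$ collapse to exactly $d_\cF-1$ and the ``$+1$'' from the leading term exactly produces the additive constant in $\dlow=s(d_\cF-1)+1$. The odd and even branches must be handled separately, but the two computations are parallel; once the tight per-subspace bound is in hand, the remaining ingredients---invoking Lemma~\ref{lem:distanceFtoG} and the reuse-of-decodings observation in the odd case---are routine.
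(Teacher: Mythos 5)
Your proposal is correct and follows essentially the same route as the paper: an upper bound $N_\ba(f(\ba))\le d(f,g)\le\elow$ via Lemma~\ref{lem:distanceFtoG}, the key inequality $N_\ba(\alpha_0)+N_\ba(\alpha)\ge 1+s(d_\cF-1)=\dlow$ obtained by the same per-subspace case analysis over $M_\ba(\beta,j)$ and $M_\ba(\star)$, and the identical complexity accounting (one call of $\fD'$ per affine subspace for odd $d_\cF$, one per point and subspace for even $d_\cF$). The only cosmetic differences are that you phrase the decision rule as a threshold test rather than an $\arg\min$, and your exact cancellation of the $\delta$-corrections for $\beta\in\{\alpha_0,\alpha\}$ is in fact a slightly tidier bookkeeping than the paper's displayed chain.
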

\begin{proof}
For completeness, we include a short proof that $\dLF \geq \dlow$, as it is closely related to the principle of the presented decoder.
Let $f,\tilde{f} \in \cL(\cF)$ and assume $f(\ba) \neq \tilde{f}(\ba)$. As the affine subspaces $\ba+V_1,...,\ba+V_s$ intersect only in $\ba$, we have
\begin{align*}
    d(f,g) \!&\geq \!\mathbbm{1}\{f(\ba)\! \neq \!\tilde{f}(\ba) \} \!+ \!\sum_{i=1}^s  d(f_\ba^{(V_i)}|_{\F_Q^t \setminus \{\origin\}}, \tilde{f}_\ba^{(V_i)}|_{\F_Q^t \setminus \{\origin\}}).
\end{align*}
By Definition~\ref{def:liftedCode} we have $f_\ba^{(V_i)}, \tilde{f}_\ba^{(V_i)} \in \cF$. Further, as $f(\ba) \neq \tilde{f}(\ba)$, we have $f_\ba^{(V_i)}\neq \tilde{f}_\ba^{(V_i)}$, so
\begin{align*}
    d(f_\ba^{(V_i)}|_{\F_Q^t \setminus \{\origin\}}, \tilde{f}_\ba^{(V_i)}|_{\F_Q^t \setminus \{\origin\}}) \geq d_\cF-1
\end{align*}
and with $s$ as in \eqref{eq:numberOfSubspaces} and $\dlow$ as in Definition~\ref{def:dlow} the bound $\dLF\geq \dlow$ follows.

We show the existence of a unique decoder for up to $\elow$ errors by proving that in this case
    $f(\ba) = \arg\min_{\alpha\in \F_q}\{N_\ba(\alpha) \}$
with $N_\ba(\alpha)$ as in Definition~\ref{def:distancePreliminaries}.

Suppose that a function $g:\ \F_Q^m\to \F_q$ is close to some function $f\in\cL(\cF)$ so that $d(f,g)\le \elow$.
For $\alpha = f(\ba)$, we have $\elow \geq d(f,g) \geq N_\ba(\alpha)$ by Lemma~\ref{lem:distanceFtoG}. Assume there exists an $\alpha' \in \F_q$ with $\alpha' \neq \alpha$ and $N_\ba(\alpha')\leq N_\ba(\alpha)$. Then
\begin{align}
N_\ba&(\alpha) + N_\ba( \alpha' ) 
                   \geq \underbrace{\mathbbm{1}\{g^{(V_i)}(\ba) \neq \alpha\} +\mathbbm{1}\{g^{(V_i)}(\ba) \neq \alpha'\} }_{\geq 1} \nonumber \\
                   &+ \sum_{j=0}^{e} \underbrace{(2e+1)}_{\geq d_\cF-1} (M_\ba(\alpha,j)+M_\ba( \alpha',j))\nonumber\\
                   &+ \sum_{\beta\neq \alpha,\alpha'} \sum_{j=0}^{e} \underbrace{2 (d_\cF-1-j+\delta_\ba(\beta))}_{\geq d_\cF-1}M(\beta,j) \nonumber \\
                   &+ \underbrace{\left(2e+2-\delta_\ba(\hat\alpha)-\delta_\ba(\hat\alpha)\right)}_{\geq d_\cF-1} M_\ba(\star) \nonumber\\
                   &\geq 1 + \left(d_\cF-1\right) \left(\sum_{\alpha,j} M_\ba(\alpha,j) + M_{\ba}(\star) \right) \overset{\eqref{eq::number of lines}}{=} \dlow\ . \label{eq:sumUpperBound}
\end{align}
By Lemma~\ref{lem:distanceFtoG} and definition of $\elow$ and $\dlow$, this is a contradiction and we conclude that $f(\ba) = \arg\min_{\alpha\in \F}\{N_\ba(\alpha) \}$.

To estimate the running time of the described algorithm first note that the values $N_\ba(\alpha) \ \forall \ \ba\in \F_Q^m, \alpha \in \F_q$ can be obtained from the decoding results
\begin{itemize}
    \item $\fD'(g_\ba^{(V_i)}(\by)) \ \forall \ \ba\in \F_Q^m, i\in [s]$ if $d_\cF$ is odd,
    \item $\fD'(\tilde g_\ba^{(V_i)}(\by)) \ \forall \ \ba\in \F_Q^m, i\in [s]$ if $d_\cF$ is even, where $\tilde g_\ba^{(V_i)}$ is equal to $g_\ba^{(V_i)}$, except that an erasure is placed at the origin. 
\end{itemize}Therefore, the required number of instances of the decoder of $\fD'$ is proportional to the number of points $|\F_Q^m|$, the number of vector spaces $s$ in the (partial) spread, and the running time of the decoder $\fD'$. Hence, it can be estimated by $O(Q^{2m-t}T(\fD'))$. When $d_\cF$ is odd, we have $g_{\ba'}^{(V_i)}(\by) = g_{\ba}^{(V_i)}(\by+\varphi_{V_i}(\ba'-\ba)) \ \forall \ \ba, \ba' \in V_i$ and therefore need to run the local decoder $\fD'$ only once per affine subspace, resulting in a running time of $O(Q^{2m-2t}T(\fD'))$.
\end{proof}

\begin{remark}\label{rem::additional comments}
We have two  additional comments:
\begin{enumerate}[leftmargin=*]
    \item If a subcode of $\cL(\cF)$ has low rate, then the complexity of its decoding might be reduced. If it is a linear $[n,k]_q$-code, then it can be represented as a systematic code. Thus, it suffices to reconstruct  $k$ information symbols and, if necessary, encode them to get the whole codeword. Thus, the running time is $O(kn Q^{-t}T(\fD'))$  (plus $O(nk)$ operations for encoding) in this case.
    \item A randomized local correction algorithm for lifted Reed-Solomon and Reed-Muller codes was proposed in~\cite{guo2016list}. A key idea of that algorithm is similar to the algorithm of Theorem~\ref{th::bounded distance decoder}, namely: assign appropriate weights to the results of local decoding and aggregate them to get a final decision for a symbol.
\end{enumerate}
    
\end{remark}

\section{High-Error Randomized Decoding}\label{sec::high error randomized decoding}
In this section, we show that for any fixed linear affine-invariant code $\cF\subseteq\{\F_Q^t\to \F_q \}$, long lifted codes $\cL(\cF)$ can correct almost all patterns of errors with the relative weight less than $ \frac{q-1}{q}- \epsilon$ with $\epsilon>0$.  To this end, we introduce the $q$-ary  symmetric  channel  ($q$-SC)  with  error  probability $p_{q,\epsilon}:=\frac{q-1}{q}- \epsilon$ that takes  a $q$-ary  symbol  at  its  input  and  outputs  either  the unchanged input symbol,  with  probability $1-p_{q,\epsilon}$,  or  one  of  the  other $q-1$ symbols,  with  probability $\frac{p_{q,\epsilon}}{q-1}$.   We shall discuss the case  when $\cF$ is a single parity-check (SPC) code, but the same decoding algorithm works well for any non-trivial affine-invariant code $\cF\subsetneq\{\F_Q^t\to \F_q \}$. 
 \begin{theorem}
 Suppose that $\cF\subseteq\{\F_Q\to\F_q\}$ is a SPC code, i.e., for any $f\in\cF$, $\sum_{\ba\in\F_Q}f(\ba)=0$. Let $\cL(\cF)\subseteq \{\F_Q^m\to\F_q\}$ be a lifted code as in Definition~\ref{def:liftedCode}. 
 \begin{enumerate}
     \item \textbf{Parameters of the code:} The dimension and the length of the code are 
     $$
     \dim_{\F_q}(\cL(\cF))=\Theta_{Q}\left(m^{Q-2}\right),\quad  n=Q^m.
     $$
     \item \textbf{High-error randomized decoder:} Let $f\in\cL(\cF)$ and $g$ be a random function each value of which is obtained independently after transmitting the corresponding value of $f$ over the $q$-SC with error probability $p_{q,\epsilon}$. For any $\delta>\exp\left(-c n\right)$ with some constant $c=c(Q,\epsilon)$, there exists a decoder $\fD$ running in time $O_{Q}\left(\frac{\log \frac{1}{\delta}+\log\log n}{\epsilon^{2Q-2}}\right)$ such that the error probability $\Pr\left\{\fD(g) \neq f\right\} < \delta$.
 \end{enumerate}
 \end{theorem}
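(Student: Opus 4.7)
The plan is to split the proof into the dimension computation and the decoder analysis.

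\textbf{Dimension.} I sandwich $\dim_{\F_q}\cL(\cF)$ between two quantities of order $m^{Q-2}$ via a degree-set enumeration. A monomial $\bx^\bd$ with $\bd\in\{0,\dots,Q-1\}^m$ lies in $\Deg(\cL(\cF))$ iff $\sum_{t\in\F_Q}\prod_i(a_i+tv_i)^{d_i}=0$ for every line $\ba+t\mathbf{v}$, which, via $\sum_{t\in\F_Q}t^k=-\mathbbm{1}\{k>0,\ (Q-1)\mid k\}$, reduces to a combinatorial condition on the base-$p$ expansions of the $d_i$'s (with $q=p^l$). For the lower bound, the sub-family of $\bd$ with $\sum_id_i\le Q-2$ is admissible (the line-restriction then has degree at most $Q-2$ and automatically satisfies the SPC relation), giving count $\binom{m+Q-2}{Q-2}=\Omega_Q(m^{Q-2})$. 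For the matching upper bound, Lucas' theorem applied to the binomials appearing in the line expansion forces every admissible $\bd$ to carry at most $Q-2$ nonzero base-$p$ digits across all coordinates, bounding the total count by $O_Q(m^{Q-2})$.

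\textbf{Decoding algorithm.} For each target point $\ba\in\F_Q^m$ the decoder picks $N$ independent uniformly random lines $\ell_1,\dots,\ell_N$ through $\ba$ and, using the SPC relation, forms the estimates
\[
    \hat f_i(\ba)\coloneqq -\sum_{\bc\in\ell_i\setminus\{\ba\}}g(\bc);
\]
it then returns the plurality value. The crucial structural observation is that two distinct lines through $\ba$ share only $\ba$, so the $\hat f_i(\ba)$ depend on pairwise disjoint collections of channel errors and are therefore mutually independent.

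\textbf{Error analysis.} The estimate $\hat f_i(\ba)$ equals $f(\ba)$ iff the $Q-1$ channel errors on $\ell_i\setminus\{\ba\}$ sum to zero. A character-sum calculation over $\F_q$ shows that the single-symbol error distribution has Fourier coefficient $\epsilon q/(q-1)$ at every nontrivial additive character, so the convolution of $Q-1$ errors places mass $\tfrac{1}{q}+\tfrac{q-1}{q}\bigl(\tfrac{\epsilon q}{q-1}\bigr)^{Q-1}$ on $0$ and mass $\tfrac{1}{q}-\tfrac{1}{q}\bigl(\tfrac{\epsilon q}{q-1}\bigr)^{Q-1}$ on each nonzero value. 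Thus $\hat f_i(\ba)$ is biased toward the correct symbol over any incorrect one by $\eta=\Theta_q(\epsilon^{Q-1})$, and a Hoeffding bound gives per-symbol failure probability at most $(q-1)\exp(-cN\eta^2)$. The dimension bound means it suffices to recover the $k=\Theta_Q(m^{Q-2})=O_Q(\poly\log n)$ information coordinates and then re-encode; taking $N=\Theta_Q((\log(1/\delta)+\log k)/\epsilon^{2(Q-1)})$ and union-bounding over those $k$ positions drives the total failure probability below $\delta$, and since $m=\log_Q n$ gives $\log k=O_Q(\log\log n)$ this matches the claimed complexity, with the hypothesis $\delta>\exp(-cn)$ keeping the Hoeffding regime non-degenerate.

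\textbf{Main obstacle.} The Fourier/Hoeffding step is routine; the subtler ingredients will be (i) obtaining the tight upper bound on the dimension through the $p$-adic characterization of admissible monomials, where one must avoid losing a factor polynomial in $m$, and (ii) organizing the random-line sampling so that the final running time absorbs the loop over the $k=O_Q(\poly\log n)$ information coordinates into the $O_Q(\cdot)$ without spuriously acquiring an $n$-factor.
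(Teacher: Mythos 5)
Your overall route is the same as the paper's: count admissible degrees to pin the dimension at $\Theta_Q(m^{Q-2})$, then decode each information symbol by a plurality vote over SPC parity checks along pairwise-disjoint lines through the point, with the bias of a single vote computed exactly and amplified by Hoeffding plus a union bound over the $k=\Theta_Q(m^{Q-2})$ information positions. Your character-sum derivation of the vote distribution is a clean (and correct) alternative to the paper's induction and yields exactly the same probabilities $\hat p=\frac1q+\frac{q-1}{q}\bigl(\frac{\epsilon q}{q-1}\bigr)^{Q-1}$ and $\check p=\frac1q-\frac1q\bigl(\frac{\epsilon q}{q-1}\bigr)^{Q-1}$. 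Two small points: (i) sampling lines independently with replacement does not literally give mutually independent estimates (repeated lines are perfectly correlated), and when $\delta$ is near the permitted floor $\exp(-cn)$ the number of samples is comparable to the total number $\frac{Q^m-1}{Q-1}$ of lines through a point, so collisions are not negligible; the paper avoids this by fixing $s$ \emph{distinct} directions (a partial spread), which is the right fix. (ii) As in the paper, you should be explicit that the stated running time is per recovered symbol, since the loop over $k=\Theta_Q(m^{Q-2})$ information positions depends on $m$ and cannot be hidden in $O_Q(\cdot)$.

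The one genuine gap is the upper bound on the dimension. You assert that ``Lucas' theorem applied to the binomials appearing in the line expansion forces every admissible $\bd$ to carry at most $Q-2$ nonzero base-$p$ digits,'' but Lucas only gives the reduction to the $p$-partial-order criterion: $\bd$ is admissible iff no $\bc\le_p\bd$ has $\deg(\bc)>0$ divisible by $Q-1$. The step from there to the digit bound is a separate combinatorial argument that you do not supply, and it is not immediate. The paper proves it as follows: if the total base-$p$ digit sum of $\bd$ is at least $Q-1$, build a chain $\bc_1\le_p\cdots\le_p\bc_{Q-1}\le_p\bd$ of distinct tuples of positive degree by adding one digit unit at a time; either the $Q-1$ residues $\deg(\bc_i)\bmod (Q-1)$ are all distinct, in which case one of them is $\equiv 0$, or two coincide and $\bc=\bc_j-\bc_i\le_p\bc_j\le_p\bd$ has positive degree divisible by $Q-1$. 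Either way $\bd$ is not admissible, so every admissible $\bd$ has digit sum at most $Q-2$, and the count of such tuples is $O_Q\bigl((m\log_p Q)^{Q-2}\bigr)=O_Q(m^{Q-2})$. Without this pigeonhole argument your upper bound, and hence the $\log k=O(\log\log n)$ term in the union bound and running time, is unsupported.
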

 \begin{proof}
 The lower bound on the dimension of the code was already proved in~\cite{guo2013new}. We prove the upper bound in Lemma~\ref{lem::codes with locality are low-rate}.  Let $f\in \cL(\cF)$ and $g$ be a noisy version of $f$, where each symbol of $f$ is corrupted by the $q$-SC with error probability $p_{q,\epsilon}=\frac{q-1}{q}-\epsilon$. We fix a partial spread  of $\F_Q^m$ into one-dimensional vector subspaces $V_1,\ldots,V_s$, where the number $s$ with $s\le (q^{m}-1)/(q-1)$ will be specified later. 
   For any $i\in[s]$ and $\ba\in\F_Q^m$, by the definition of lifting it follows that $f_\ba^{(V_i)}$ belongs to $\cF$. Since $\cF$ is a SPC code, the symbol $f(\ba)$ can be reconstructed by reading symbols indexed by $\bb\in \ba+V_i \setminus\{\ba\}$, i.e., 
   $$
   f(\ba) = -\sum\limits_{\bb\in \ba+V_i \setminus\{\ba\}} f(\bb)
   $$
    Define the indicator random variables
\begin{align*}
\psi_{\ba}^{(i)} &\coloneqq \mathbbm{1}\left\{ -f(\ba) = \sum\limits_{\bb\in \ba+V_i \setminus\{\ba\}} g(\bb)\right\},\\
\psi_{\ba}^{(i,\alpha)} &\coloneqq \mathbbm{1}\left\{ -f(\ba) = \alpha +\sum\limits_{\bb\in \ba+V_i \setminus\{\ba\}} g(\bb)\right\}\quad\text{for }\alpha\in\F_q^*.
\end{align*}
   Then, by Lemma~\ref{lem:Linear combination}, the mathematical expectation
   $$
  \mathbb{E}[\psi_{\ba}^{(i)}]   = \frac{1}{q} + \frac{q-1}{q}\left(\frac{\epsilon q}{q-1}\right)^{Q-1}\eqqcolon\hat p
  $$
   and for any $\alpha\in\F_q^*$, 
   $$
  \mathbb{E}[\psi_{\ba}^{(i,\alpha)}]  = \frac{1}{q} - \frac{1}{q}\left(\frac{\epsilon q}{q-1}\right)^{Q-1}\eqqcolon\check p.
   $$
Then we define the random variables
\begin{align*}
\Sigma_{\ba} \coloneqq \sum_{i\in[s]} \psi_{\ba}^{(i)},\quad
\Sigma^{(\alpha)}_{\ba} \coloneqq  \sum_{i\in[s]} \psi_{\ba}^{(i,\alpha)} \quad\text{for }\alpha\in\F_q^*.
\end{align*}
Suppose that we do the majority decision for the symbol $f(\ba)$ by taking $\beta\in\F_q$ that maximizes the number of subspaces $\ba+V_i$, for which $\sum\limits_{\bb\in \ba+V_i \setminus\{\ba\}} g(\bb)=-\beta$. Then the result of the decoding would be incorrect with probability at most
\begin{equation}\label{eq::upper bound on the decision}
    \Pr\left\{\Sigma_{\ba} < \Sigma_{\ba}^{(\alpha)}\text{ for some } \alpha\in\F_q^*\right\}.
\end{equation}
Note that $\Sigma_{\ba}$ and $\Sigma_{\ba}^{(\alpha)}$ are binomial random variables $B(s, \hat p)$ and $B(s, \check p)$. 
Let $\bar p: = (\hat p + \check p)/2$. Employing Hoeffding's bound, we estimate the probability~\eqref{eq::upper bound on the decision} by
\begin{align*}
    &(q-1)\left(\Pr\left\{B(s,\hat p) < \bar p s\right\} + \Pr\left\{B(s,\check p) \ge \bar p s\right\}\right)\\
    &\le 2(q-1) \exp(- 0.5 (\hat p - \check p)^2  s)
\end{align*}
By Lemma~\ref{lem::codes with locality are low-rate},  $\dim_{\F_q}(\cL(\cF))=\Theta_{Q}(m^{Q-2})$. To reconstruct the original polynomial $f$, it suffices to recover the evaluation of $f$ at information positions only. Thus, by the union bound, the probability of error in recovering information symbols can be bounded by $
O_{Q}\left(m^{Q-2}\exp(- 0.5(\hat p - \check p)^2  s)\right)$. 
This value is less than $\delta$, if $
s = \Omega_{Q}\left(\frac{\log\frac{1}{\delta}+\log m}{\epsilon^{2Q-2}}  \right)$.
 \end{proof}

\begin{lemma}\label{lem:Linear combination}
Let $\xi_1,\ldots, \xi_k$ be i.i.d. random variables taking the value $0$ with probability $1-\epsilon$ and any other value in the field $\F_q$ with probability $\frac{\epsilon}{q-1}$. Then
$$
\Pr\left\{\sum_{i=1}^{k}\xi_i = 0\right\} =\frac{1}{q} + \left(1-\frac{1}{q}\right)\left(1-\frac{q\epsilon}{q-1} \right)^k
$$
and for any $\alpha\in\F_q^*$,
$$
\Pr\left\{\sum_{i=1}^{k}\xi_i = \alpha\right\} =\frac{1}{q} - \frac{1}{q}\left(1-\frac{q\epsilon}{q-1} \right)^k
$$
\end{lemma}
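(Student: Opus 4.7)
The plan is to reduce the problem to a first-order linear recurrence by exploiting the symmetry of the noise distribution over $\F_q^*$. Define $p_k := \Pr\{\sum_{i=1}^k \xi_i = 0\}$ and $q_k(\alpha) := \Pr\{\sum_{i=1}^k \xi_i = \alpha\}$ for $\alpha \in \F_q^*$. Since each $\xi_i$ is uniform on $\F_q^*$ conditional on being nonzero, multiplication by any fixed $c \in \F_q^*$ preserves the distribution of each $\xi_i$ and hence the joint distribution of $(\xi_1,\dots,\xi_k)$. This forces $q_k(\alpha) = q_k(c\alpha)$ for every $c \in \F_q^*$, so $q_k(\alpha)$ does not depend on $\alpha$; call this common value $q_k$, and note $p_k + (q-1) q_k = 1$ by total probability.

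Next, I would condition on $\xi_{k+1}$: splitting into the cases $\xi_{k+1} = 0$ (probability $1-\epsilon$) and $\xi_{k+1} \in \F_q^*$ (each specific nonzero value with probability $\epsilon/(q-1)$), the event $\{\sum_{i=1}^{k+1}\xi_i = 0\}$ requires $\sum_{i=1}^k \xi_i = -\xi_{k+1}$, yielding
\begin{align*}
p_{k+1} = (1-\epsilon)\, p_k + \epsilon\, q_k.
\end{align*}
Substituting $q_k = (1-p_k)/(q-1)$ collapses this into a first-order linear recurrence in $p_k$ alone, with contraction ratio $\lambda := 1 - \frac{q\epsilon}{q-1}$ and unique fixed point $1/q$.

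To close the argument, I would solve the recurrence via $p_k - 1/q = \lambda^k (p_0 - 1/q)$ together with the initial condition $p_0 = 1$ (the empty sum is $0$), which immediately gives the claimed expression for $\Pr\{\sum_{i=1}^k \xi_i = 0\}$. Plugging this back into $q_k = (1-p_k)/(q-1)$ yields the second formula for $\alpha \in \F_q^*$. I do not expect a real obstacle: the only conceptually nontrivial step is the symmetry argument that reduces the $q-1$ nonzero targets to a single probability, after which everything is a textbook manipulation of a geometric recurrence. A character-sum/Fourier proof on $(\F_q,+)$ would also work in one line, but the recurrence route is entirely elementary and self-contained.
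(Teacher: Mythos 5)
Your proposal is correct and takes essentially the same route as the paper: condition on the last variable, use the symmetry of the noise distribution over $\F_q^*$ to reduce the $q-1$ nonzero targets to a single probability, and combine with total probability to get the closed form. The only cosmetic difference is that you solve the resulting first-order linear recurrence via its fixed point $1/q$ and contraction ratio $1-\frac{q\epsilon}{q-1}$, whereas the paper states the closed form and verifies it by induction on $k$.
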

\begin{proof}
  We shall prove this statement by induction on $k$. For $k=1$, the statement is trivial. Suppose that the statement holds for $k-1$. From the independence of the variables $\xi_i$ and the inductive assumption, it follows that
  \begin{align*}
  \Pr&\left\{\sum_{i=1}^{k}\xi_i = 0\right\} = \sum_{\beta\in\F_q} \Pr\left\{\sum_{i=1}^{k-1}\xi_i = -\beta,\ \xi_k=\beta \right\}\\
  &=\sum_{\beta\in\F_q} \Pr\left\{\sum_{i=1}^{k-1}\xi_i = -\beta\right\} \Pr\left\{ \xi_k=\beta \right\}\\
  &=\left(\frac{1}{q} + \left(1-\frac{1}{q}\right)\left(1-\frac{q\epsilon}{q-1} \right)^{k-1}\right) (1-\epsilon)\\
  &\quad +\left(\frac{1}{q} - \frac{1}{q}\left(1-\frac{q\epsilon}{q-1} \right)^{k-1}\right)\epsilon\\
  &=\frac{1}{q}+\left(1-\frac{q\epsilon}{q-1} \right)^{k-1}\left(\frac{(q-1)(1-\epsilon)}{q}-\frac{\epsilon}{q}\right)\\
  &= \frac{1}{q} + \left(1-\frac{1}{q}\right)\left(1-\frac{q\epsilon}{q-1} \right)^k.
  \end{align*}
  Clearly, $\Pr\left\{\sum_{i=1}^{k}\xi_i = \alpha\right\}$ doesn't depend on $\alpha\in\F_q^*$. Thus,
  \begin{align*}
  \Pr\left\{\sum_{i=1}^{k}\xi_i = \alpha\right\} &= \frac{1- \Pr\left\{\sum_{i=1}^{k}\xi_i = 0\right\} }{q-1} \\
  &=\frac{1}{q} - \frac{1}{q}\left(1-\frac{q\epsilon}{q-1} \right)^k.
  \end{align*}%
\end{proof}
\begin{lemma}\label{lem::codes with locality are low-rate}
 Suppose that $\cF\subseteq\{\F_Q\to\F_q\}$ is a SPC code. Let $\cL(\cF)\subseteq \{\F_Q^m\to\F_q\}$ be a lifted code as in Definition~\ref{def:liftedCode}. Then the dimension of the code  $\cL(\cF)$ is $\Theta_{Q}(m^{Q-2})$.
\end{lemma}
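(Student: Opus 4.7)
The plan is to prove the upper bound $\dim_{\F_q}(\cL(\cF)) = O_Q(m^{Q-2})$; the matching lower bound is already established in~\cite{guo2013new}. The strategy is to exhibit a small \emph{information set}, i.e., a set $S \subseteq \F_Q^m$ such that any codeword $f \in \cL(\cF)$ is uniquely determined by its restriction $f|_S$.

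Specifically, let $S \coloneqq \{\bx \in \F_Q^m : \wt(\bx) \leq Q-2\}$, where $\wt(\bx)$ denotes the number of nonzero coordinates of $\bx$. Since each of the $w \leq Q-2$ nonzero entries can take $Q-1$ values, $|S| = \sum_{w=0}^{Q-2}\binom{m}{w}(Q-1)^w = O_Q(m^{Q-2})$. The restriction map $\cL(\cF) \to \F_q^{|S|}$ is $\F_q$-linear, so its injectivity would immediately give $\dim_{\F_q}\cL(\cF) \leq |S| = O_Q(m^{Q-2})$. To prove injectivity, it suffices to show that any $f \in \cL(\cF)$ with $f|_S = 0$ satisfies $f \equiv 0$, which I would do by strong induction on $w = \wt(\bx)$.

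The base case $w \leq Q-2$ holds by assumption. For the step $w \geq Q-1$, pick any $Q-1$ of the nonzero coordinates of $\bx$, say $i_1, \ldots, i_{Q-1}$ with values $x_{i_1}, \ldots, x_{i_{Q-1}} \in \F_Q^*$. Since $|\F_Q^*| = Q-1$, I can fix any bijection $k \mapsto y_k \in \F_Q^*$ and set $\alpha_k \coloneqq -x_{i_k}/y_k$. On the line $L = \{\bx + t\mathbf{v} : t \in \F_Q\}$ with $\mathbf{v} \coloneqq \sum_{k=1}^{Q-1}\alpha_k\,\mathbf{e}_{i_k}$, the coordinate $i_k$ of $\bx + t\mathbf{v}$ equals zero exactly when $t = y_k$, so for every $t \in \F_Q^*$ there is a unique $k$ for which exactly one of the coordinates in $\{i_1, \ldots, i_{Q-1}\}$ vanishes, while all other coordinates of $\bx + t\mathbf{v}$ agree with those of $\bx$. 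Hence each of the $Q-1$ points $\bx + t\mathbf{v}$ with $t \in \F_Q^*$ has weight $w-1 < w$, and by the inductive hypothesis $f$ vanishes at all of them. The SPC relation $\sum_{\bb \in L} f(\bb) = 0$, which holds because $f|_L \in \cF$, then forces $f(\bx) = 0$.

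The key step is engineering a line $L \ni \bx$ on which $Q-1$ out of $Q$ points have strictly smaller weight than $\bx$, so that the SPC relation collapses to a single equation determining $f(\bx)$. This is made possible by the tight count $|\F_Q^*| = Q-1$ matching the number of selected coordinates, allowing the $Q-1$ ``zeroing times'' $y_k$ to exhaust all of $\F_Q^*$.
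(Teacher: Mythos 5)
Your proof is correct, but it takes a genuinely different route from the paper. The paper works on the ``frequency side'': it invokes the Guo--Kopparty--Sudan characterization of lifted affine-invariant codes by their degree sets, so that $\dim_{\F_q}(\cL(\cF))$ equals the number of \emph{good} exponent tuples $\bd\in\bbZ_Q^m$ (those admitting no $\bc\le_p\bd$ with $\deg(\bc)\Mods{Q}=Q-1$), and then bounds this count by a pigeonhole argument on the $p$-ary digit sums of $\bd$. Your argument works on the ``evaluation side'': you exhibit the Hamming ball $S=\{\bx:\wt(\bx)\le Q-2\}$ of size $O_Q(m^{Q-2})$ as an information set and prove injectivity of the restriction map by induction on $\wt(\bx)$, constructing for each $\bx$ of weight $w\ge Q-1$ a line through $\bx$ whose other $Q-1$ points all have weight $w-1$; this construction is sound (the choice $\alpha_k=-x_{i_k}/y_k$ with $\{y_k\}=\F_Q^*$ does exactly what you claim), and the SPC constraint on that line then determines $f(\bx)$. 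One phrasing slip: for $t=y_k$ the coordinates $i_j$ with $j\ne k$ do \emph{not} agree with those of $\bx$ --- they change to $x_{i_j}(1-y_k/y_j)$ --- but they remain nonzero, which is all the weight count needs. What each approach buys: yours is more elementary and self-contained (it uses only the defining local parity checks, needs no Lucas-type machinery, and as a bonus produces an explicit information set, hence a systematic encoder and an erasure-decoding order), while the paper's degree-set method plugs directly into the exact dimension formula of \cite{guo2013new} and is the natural framework if one wants sharper constants or the precise monomial basis rather than just the order of growth. Deferring the lower bound to \cite{guo2013new} is consistent with what the paper itself does.
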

\begin{proof}
First we introduce some useful notation. Recall that $q=p^l$ for a prime integer $p$. A tuple $\bc\in\bbZ_q^m$ is less than or equal to a tuple $\bd\in\bbZ_q^m$ by the\textit{ $p$-partial order}, say $\bc \le_p \bd$, if $c_i=\sum_{j=0}^{l-1}c_i^{(j)}p^j$ and $d_i=\sum_{j=0}^{l-1}d_i^{(j)}p^j$ and $c_{i}^{(j)}\le d_i^{(j)}$ for all $i\in[m]$ and $j\in\bbZ_l$. Define an operation $\Mods{q}$ that takes a non-negative integer and maps it to the element from $\bbZ_q$ as follows
$$
a \Mods{q}\coloneqq
\begin{cases}
0,\,&\text{if   } a=0,\\
b\in [q-1],\,&\text{if }a\neq 0,\,a=b \Mod{q-1}.
\end{cases}
$$
For a tuple $\bd$, we also define its \textit{degree} $\deg(\bd)$ to be $\sum_{i=1}^{m}d_i$.

In~\cite{guo2013new}, it was proved that $\dim_{\F_q}(\cL(\cF))$ can be found by counting all possible \textit{good} tuples $\bd\in\bbZ_Q^m$ such that there is no $\bc\in\bbZ_Q^m$ such that $\bc\le_p \bd$ and $\deg(\bc) \Mods{Q} = Q-1$. 
Let $S\subset [m]$ with $|S|=Q-2$ and a tuple $\bd=\bd(S)\in\bbZ_Q^m$ has the property $d_i=1$ for $i\in S$ and $d_i=0$ otherwise. 
Clearly, all $\binom{m}{Q-2}$ such tuples are good. Thus, $\dim_{\F_q}(\cL(\cF))\ge \Omega_{Q}(m^{Q-2})$ which was first shown in~\cite{guo2013new}. 

It remains to prove the upper bound on the dimension. We shall prove that the number of appropriate  $\bd\in\bbZ_Q^m$ is at most $\left(1+\log_p Q^{m}\right)^{Q-2}$. Toward a contradiction, assume that it is larger than this value. Then, there exists at least one $\bd$ such that $\sum_{i=1}^{m}\sum_{j=0}^{\log_p Q -1} d_i^{(j)}\ge Q-1$.
We will prove that this $\bd$ cannot be good, i.e., there exists a $\bc\in\bbZ_Q^m$ with $\bc\le_p\bd$ such that $\deg(\bc) \Mods{Q}=Q-1$. 
To see this, we construct a sequence of $Q-1$ distinct tuples $\bc_1,\ldots,\bc_{Q-1}\in\bbZ_Q^m$ with positive degrees such that  $\bc_{i-1}\le_p \bc_{i}\le_p \bd $ for $i\in [Q-1]$. Clearly, if all $\deg(\bc_i) \Mods{Q}$ are different, then there exists $j\in[Q-1]$ so that  $\deg(\bc_j) \Mods{Q} = Q-1$.
On the other hand, if $\deg(\bc_i) \Mods{Q} =\deg(\bc_j)\Mods{Q}$ for $i< j$, then the tuple $\bc\coloneqq \bc_j-\bc_i \neq \origin$ satisfies two required conditions: $\deg(\bc)\Mods{Q} = Q-1$ and $\bc\le_p 
\bc_j\le_p \bd $. Thus, $\bd$ is not good and this contradiction completes the proof.
\end{proof}

\bibliographystyle{IEEEtran}
\bibliography{main}

\end{document}